\documentclass[10pt, conference, twocolumn,a4paper]{IEEEtran}

\usepackage[english]{babel}
\usepackage[utf8]{inputenc}
\usepackage{algorithm}
\usepackage{algpseudocode}

\usepackage{lipsum,multicol}
\usepackage{bigdelim}
\usepackage{multirow}
\usepackage{calc}
\usepackage{enumerate}
\usepackage{cite}
\usepackage{float}
\usepackage[T1]{fontenc}
\usepackage{enumerate}
\usepackage{amstext,amsthm,latexsym,amssymb,graphicx,tabls}
\usepackage{amssymb,amsthm}
\usepackage{amsfonts,amsmath}
\usepackage{latexsym}
\usepackage[mathscr]{euscript}
\usepackage{graphics}
\usepackage{graphicx}
\usepackage{cite}
\usepackage{cases}
\usepackage{tabularx,bbm}
\usepackage{color}
\usepackage{epstopdf}

\newcommand{\beq}{\begin{equation}}
\newcommand{\eeq}{\end{equation}}
\newcommand{\beqq}{\begin{equation*}}
\newcommand{\eeqq}{\end{equation*}}
\newcommand{\ei}{\end{itemize}}
\newcommand{\bi}{\begin{itemize}}
\newcommand{\ee}{\end{enumerate}}
\newcommand{\be}{\begin{enumerate}}

\newtheorem{definition}{Definition}

\newtheorem{prop}{Proposition}

\theoremstyle{remark}

\newcommand{\E}{\mathbbm{E}}



\begin{document}
\title{Learning from Experience: A Dynamic Closed-Loop QoE Optimization for Video Adaptation and Delivery}
\author{\IEEEauthorblockN{Imen Triki$^\star$, Quanyan Zhu$^\circ$, Rachid El-Azouzi$^\star$, Majed Haddad$^\star$ and Zhiheng Xu$^\circ$}
\IEEEauthorblockA{$^\star$CERI/LIA, University of Avignon,
Avignon, France.\\
$^\circ$ NYU Tandon School of Engineering, New York, USA.}
}

\maketitle
\begin{abstract}
The quality of experience (QoE) is known to be subjective and context-dependent. Identifying and calculating the factors that affect QoE is indeed a difficult task. Recently, a lot of effort has been devoted to estimate the users' QoE in order to improve video delivery. In the literature, most of the QoE-driven optimization schemes that realize trade-offs among different quality metrics have been addressed under the assumption of homogenous populations. Nevertheless, people perceptions on a given video quality may not be the same, which makes the QoE optimization harder. This paper aims at taking a step further in order to address this limitation and meet users' profiles. To do so, we propose a closed-loop control framework based on the users' (subjective) feedbacks to learn the QoE function and optimize it at the same time. Our simulation results show that our system converges to a steady state, where the resulting QoE function noticeably improves the users' feedbacks.
\end{abstract}

\vspace{.5cm}
\begin{IEEEkeywords}
QoE, learning, neural network, average video quality, startup delay, video quality switching, video stalls, rebuffering delay.
\end{IEEEkeywords}

\section{Introduction}

Due to the emergence of smartphones in human daily life and the tremendous advances in broadband access technologies, video streaming services have greatly evolved over the last years to become one of the most provided services in the Internet. According to Cisco \cite{CiscoSurvey}, http video streaming will be $82 \%$ of all internet consumers' traffic by $2020$, up from $70\%$ in 2015. It is then well understood that more and more interest is being today accorded to video streaming services in the hope of making all people satisfied with the video delivered quality. Nevertheless, satisfying all users at once is a hard task; in fact, someone may appreciate a video quality that someone else may not appreciate at all. This makes the study of the QoE too complex.

In the literature, different studies have been explored to express the user's QoE as an explicit function of some metrics. Some works claim that the QoE can be directly mapped to some QoS metrics such as the throughput, the jitter and the packet loss \cite{Miguel16NPA, Mellouk12NOC}. Other recent works found that the QoE could be expressed through some application metrics such as the frequency of video freezing (stalls), the startup delay, the average video quality and the dynamic of the quality changing during the streaming session \cite{Vriendt14BellLabs,Balachandran12ACM}. However, the QoE may change depending on the video context and some other external factors linked to the user himself \cite{Amour2015}, which explains the trend of using new standardized subjective metrics such as the MOS (Mean Opinion Score) and the users' engagement rate \cite{DOB11,videoEngagement, Balachandran13ACM}. A direct relation between the time spent in rebuffering and the user's engagement has been studied in \cite{Balachandran13ACM}. 

In the industry, various adaptive video streaming solutions have been explored to meet the users' expectations, such as Microsoft's smooth streaming, Adobe's HTTP dynamic streaming and Apple's live streaming. All these solutions use the well-known DASH (Dynamic Adaptive Streaming over HTTP) standard. Despite the emergence of several proposals to improve the QoE, there is still no consensus across these solutions since the users' perceptions are quite different.

The main motivation of this work is to make DASH deal with the very wide heterogeneity of people. At the core, lies the idea of performing real time supervision on the users' real perceptions to permanently improve the performance of quality adaptation.
To the best of our knowledge, such a paradigm has not been yet investigated for adaptive video streaming. In the literature, we found that the users' feedbacks on the video quality delivery were only used to study the human perceptions or to validate some analytical QoE models to help predict the QoE \cite{Amour2015, Testolin14MED-HOC-NET, Vriendt14BellLabs, Bampis}.

In \cite{Bampis}, QoE prediction was performed by incorporating machine learning, users' feedbacks and some of the QoE-related features such as rebuffering and memory-effect. Following the same trend, we combine machine learning with a QoE-maximization problem in a closed-loop architecture to dynamically adapt video quality with respect to the users' feedbacks. We focus on two main ideas: maximizing the feedbacks returned by users and exploiting the knowledge of future throughput. Note that, thanks to the exploitation of Big Data in network capacity modelling and prediction, throughput estimation becomes possible today and may go up to few seconds to the future\cite{she2010}. Though, very few papers were exploiting the knowledge of future throughput variation \cite{KnowingFutureInfocom13,ImenWoWMOM16,ism/TrikiAH16}. In \cite{KnowingFutureInfocom13}, authors designed a QoE-driven optimization problem and proposed a cross-layer scheme to minimize the cost of capacity usage and to avoid video stalls under the assumption of a perfect throughput knowledge. The main shortcoming of their approach is that it is only suited for classical video streaming as it ignores important visual quality metrics related to adaptive streaming such as video resolution and bitrate distribution. Holding the same assumption, authors in \cite{ImenWoWMOM16} proposed a proactive video content delivery algorithm, called NEWCAST, to adjust the quality of adaptive video streaming over the future horizon. Their work was afterward extended in \cite{ism/TrikiAH16} to make their algorithm well suited for unperfected throughput prediction.


Our contribution in this paper is twofold:
\begin{itemize}
\item We exploit the knowledge of future throughput variations in order to solve the optimization problem addressed in \cite{Yin:2015:CAD} in a smoother and faster manner based on similar mathematical analysis than in \cite{ImenWoWMOM16}.
\item We design a closed-loop framework based on client-server interactions to learn the overall users' perceptions and to fittingly optimize the quality of the streaming. The performance of our proposed framework is obtained using Matlab and NS3 simulations under multi-user scenario.
\end{itemize}
The paper is organized as follows: In Section \ref{section1}, we formulate the single-user QoE-optimization problem. Then, in Section \ref{section2}, we discuss the strategy of the optimal solution and propose a heuristic that performs close to the optimal solution. In Section \ref{section3}, we address the multi-user case and propose a closed-loop based framework using neural networks. Then, in Section \ref{section4}, we evaluate the performance of this framework through some numerical results. Section \ref{conclusion} concludes the paper.

\section {Single-user QoE problem formulation} \label{section1}
\subsection {The video streaming model}
We model a video as a set of $S$ segments (or shunks) of equal duration in second. Each segment is composed of N frames and is stored on the streaming server at different quality representations. Each representation designs a video encoding rate (hereinafter called bit-rate). Denote by $ b_1, b_2, \dots, b_L $ the available video bit-rates where $b_i<b_j \ for \ i<j $. We suppose that all the video frames are played with a deterministic rate, e.g 25 frames per second (fps). Denote by $\lambda$ this frame rate.

For each segment, the player indicates to the server the quality needed for streaming it. Let $b^{(s)}$ be the bit-rate associated to segment $s$ and \textbf{$\mathcal{B}=\{b^{(1)}, \dots, b^{(S)} \}$} be the set of bit-rates associated to all video segments.
We assume that the video playback buffer is big enough to avoid eventual buffer overflow events. We denote by $B(t_k)$ the number of segments that the playback buffer contains at time $t_k$. At the beginning of the streaming session, a prefetching stage is introduced to avoid future buffer underflows; $T_0$ seconds of video (corresponding to $x_0$ segments) have to be completely appended to the buffer before starting playing the video.
When there are no segments in the playback buffer, the video stops and a new prefetching stage is introduced to append again $T_0$ seconds of video before pursuing the lecture. This event is, hereinafter, referred to as video stall.

In our study, we exploit the knowledge of the user's throughput over a given horizon to the future $\mathcal{H}=[t_1 \dots t_N]$. Before starting the session, we propose to set all the video segments' bit-rates, to be optimally streamed over that horizon . 
We denote by $r(t_k)$ the user's estimated throughput at time $t_k$, $k \in [1 \dots N]$ and by $b_{t_k}$ the video bit-rate scheduled to be streamed at that time.
Note that $b_{t_k}$ will only depend on the throughput variation and the set of segments' bit-rates $\mathcal{B}$. In the following we denote it by $b_{t_k}(\mathcal{B},r)$.

To model the dynamic of the playback buffer, we define two different phases:
\bi
\item the start-up/rebuffering phase: referred to as \textbf{BaW-phase}\footnote{BaW:Buffer and Wait}, where the media player only downloads the video without playing it
\item the playback phase: referred to as \textbf{BaP-phase}\footnote{BaP:Buffer and Play}, where the player downloads and plays the video at the same time.
\ei
Depending on the state of the buffer at each time of observation $t_k$, we define $2$ variables $S_{BaP}(t_k)$ and $\tau_{BaW}(t_k)$ as follows:
\begin{enumerate}
\item if the player is on a BaP-phase, $S_{BaP}(t_k)$ defines the time at which that phase has started
\item if the player is on a BaW-phase, $S_{BaP}(t_k)$ defines the time at which the {\it next} BaP-phase will start
\item if the buffer is empty, $\tau_{BaW}(t_k)$ determines the duration of the following BaW-phase
\item if the buffer is not empty, $\tau_{BaW}(t_k)$ takes zero
\end{enumerate}

which can be mathematically expressed as
\vspace{-0.0cm}
\begin{equation}\label{BaP}
\footnotesize
S_{BaP}(t_k) = max \{S_{BaP}(t_{k-1}) \ , \ \delta (B(t_k)=0)\cdot(t_k + \tau_{BaW}(t_k) )\}
\end{equation}
\vspace{-0.2cm}
\begin{equation}\label{BaW}
\small
 \tau_{BaW}(t_k)= \delta (B(t_k) = 0)\cdot \{\tau \ ; \ \sum_{t=t_k}^{t_k + \tau} \frac{\lambda \cdot r(t)}{N \cdot b_{t}(\mathcal{B},r)} = x_0 \}
\end{equation}
\vspace{-0.2cm}
where
\vspace{-0.2cm}
 \[ \small \delta(X) \ \ \left\{
\begin{array}{l l}

 1 \ \text{if} \ X=\text{is true}\\
 \\
 0 \ \text{otherwise}\\

 \end{array} \right. \]
\vspace{-0.2cm}
Hence the dynamic of the playback buffer can be written as

\begin{equation}\label{B}
\small
 B(t_k) = \{ B(t_{k-1}) + \frac{\lambda \cdot r(t_k)}{N \cdot b_{t_k}(\mathcal{B},r)} - \lambda \cdot \delta(t_{k} \ge S_{BaP}(t_k)) \}^+,
\end{equation}
where $\{x\}^+ = max \{x,0\}$ is used to ensure that the playback buffer occupancy cannot be negative.

\vspace{-0.2cm}
\begin{figure} [h!]
\begin{center}
\includegraphics[width=8cm]{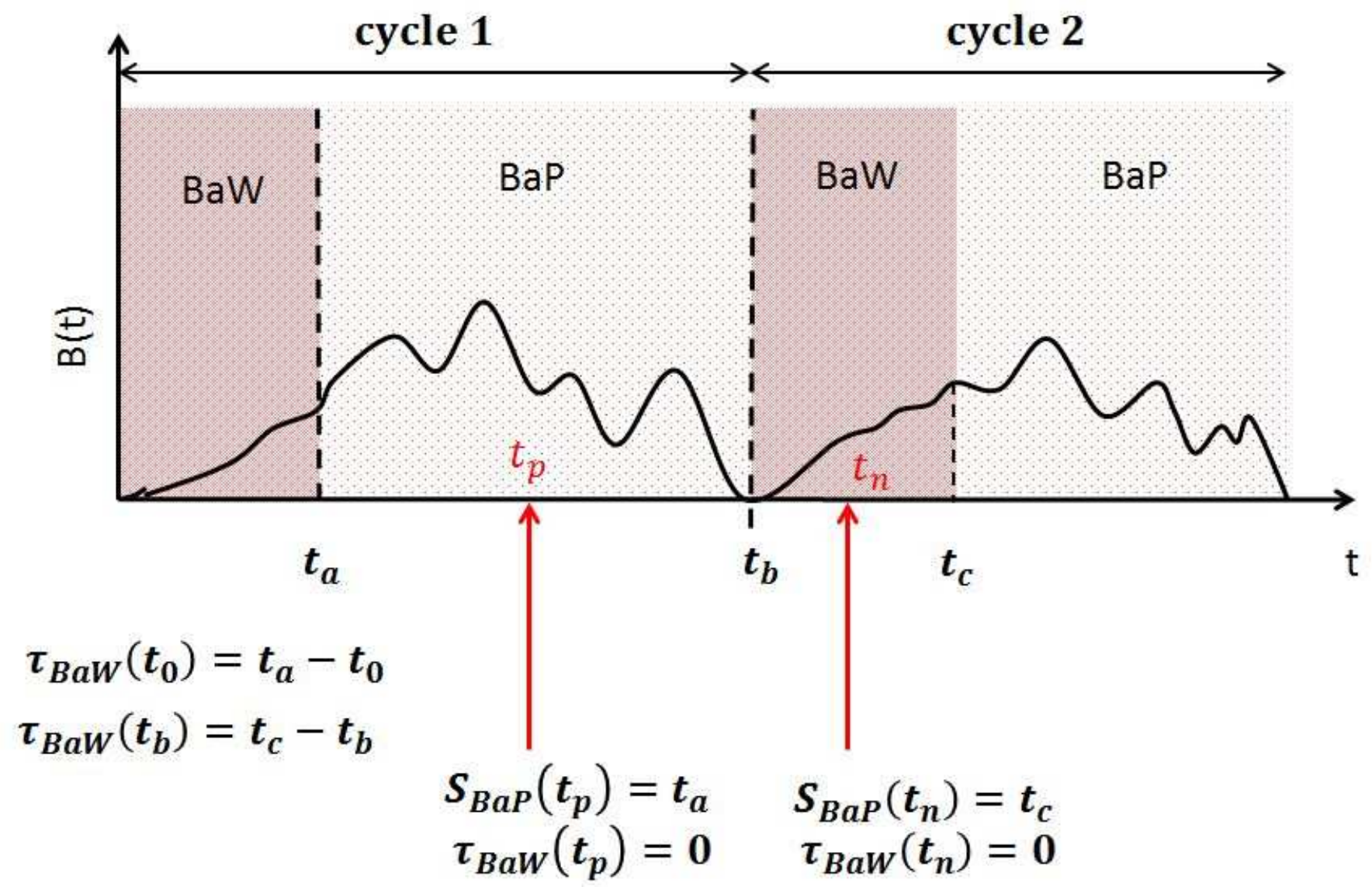}
\caption{Illustration of the playback buffer BaW-BaP cycles.}
\label{fig:NN}
\end{center}

\end{figure}

\subsection{The QoE-Optimization Problem}

The goal of bit-rate adaptation in video streaming services is to improve the users' globally perceived quality of the video. In the literature, it is challenging to quantitatively define the QoE as it encompasses many complex factors such as the user's mood, the time and the way one watches the video, the video context, etc. In this work, we use five of the most common key QoE metrics to express our objective QoE function.
\begin{enumerate}
 \item The average video quality $\phi_1$, which is the average per-segment quality over all segments given by
\vspace{-0.2cm}
\begin{equation}
 \phi_1(\mathcal{B})=\frac{1}{S} \sum_{s=1}^{S} b^{(s)}.
\end{equation}
 \item The startup delay ratio, which is the proportion of time that takes the first BaW-phase before starting the video:
\vspace{-0.2cm}
\begin{equation}
 \phi_2(\mathcal{B}) = \frac{\tau_{BaW}(t_0)}{T},
 \end{equation}
where $T$ is the video length in seconds.
 \item The average number of video quality switching given by
\vspace{-0.2cm}
\begin{equation}
\phi_3(\mathcal{B}) = \frac{1}{S-1} \sum_{s=2}^{S} \delta \{ b^{(s)} \ne b^{(s-1)}\}.
\end{equation}
 \item The number of video stalls:
\vspace{-0.2cm}
\begin{equation}
\phi_4(\mathcal{B}) = \sum_{k=1}^T \delta \{ {B(t_k) = 0} \}.
\end{equation}
 \item The rebuffering delay ratio, which is the proportion of time that take all the rebuffering events:
\vspace{-0.2cm}
\begin{equation}
\phi_5(\mathcal{B}) = \frac{1}{T} \sum_{k=1}^T { \delta \{ B(t_k) = 0 \} \cdot \tau_{BaW}(t_k)}.
\end{equation}
\end{enumerate}

As the user's preference on each of these QoE metrics may not be the same, we assign to each metric $\phi_i$ a weighting parameter $\omega_i$ to adjust its impact on the global QoE variation. As done in previous works \cite{Yin:2015:CAD}, we model our global QoE as a linear function of the weighted five aforementioned QoE metrics, namely,
\begin{equation}\label{objFunction}
 \mathcal{Q(\mathcal{B})} = \sum_{i=1}^5 \omega_i \phi_i(\mathcal{B}),
\end{equation}
where $ \omega_1 \ge 0 \ \text{and} \ \omega_i \le 0, \ \forall i \in 2, \dots, 5$.

Let $\mathcal{W} = (\omega_1, \dots,\omega_5)^\top$ be the vector of weights and $\Phi(\mathcal{B}) = (\phi_1(\mathcal{B}),\dots,\phi_5(\mathcal{B}))^\top$ be the vector of QoE metrics.

If we assume that the user tolerates at most $p$ stalls during the hole session, we end up formulating our single-user QoE optimization problem as follows
\begin{equation}\label{max}
 \underset{\mathcal{B}}{\text{max}} \
 \mathcal{Q}(\mathcal{B})= \mathcal{W}^\top \Phi(\mathcal{B}) \
 \end{equation}
 \[\textrm{s.t.} \ \ \left\{
\begin{array}{l l}

 \sum_{k=0}^{N} \frac{\lambda \cdot r(t_k)}{N \cdot {b}_{t_k}(\mathcal{B},r)} = S\\
 \\
\phi_4(\mathcal{B}) \le p \ ; \ p \in \mathbb{N};

 \end{array} \right. \]
where the first constraint ensures that the whole video will be streamed at the end of the future horizon.

\section {Proposed Solution for single-user QoE problem} \label{section2}
The QoE optimization problem defined in (\ref{max}) is a combinatorial problem with a very high complexity (NP hard). 
In \cite{Yin:2015:CAD}, authors were addressing a similar problem, but they were assuming an inaccurate throughput estimation, which justifies their choice to adopt an MPC model to solve their QoE optimization problem. The assumption of accurately knowing the future with adaptive video streaming was explored in \cite{ImenWoWMOM16} where authors proposed an \textit{ascending bitrate strategy} to optimize the video delivery. In this paper, we characterize an important propriety of the optimal strategy, which allows us to propose a heuristic approach that performs close to the optimal solution.

\subsection{Propriety of the optimal solution: Ascending bit-rate strategy per BaW-BaP cycle:}

\begin{definition}
We say a bit-rate strategy is {\bf ascending per BaW-BaP cycle}, if the quality level of the segments increases during each BaW-BaP cycle of the streaming session.
\end{definition}

\begin{prop}
Assume that there exists a solution $\mathcal{B}$ that satisfies the constraints in (\ref{max}), then there exists an ascending bit-rate per BaW-BaP cycle solution $\mathcal{B}_{as}$ that optimizes problem (\ref{max}).
\label{prop1}
\end{prop}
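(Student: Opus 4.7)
The plan is an exchange argument in the spirit of sort-and-dominate. Since each bit-rate lies in the finite set $\{b_1,\ldots,b_L\}$, the feasible set of (\ref{max}) is finite; by hypothesis it is non-empty, so an optimizer $\mathcal{B}^\star$ exists. Fix such a $\mathcal{B}^\star$, let $C_1,\ldots,C_m$ be the BaW-BaP cycles produced by its execution, and let $M_k$ denote the multiset of bit-rates that $\mathcal{B}^\star$ assigns to the segments of $C_k$. I would construct $\mathcal{B}_{as}$ by permuting, inside each $C_k$, the assignment so that segment indices map to elements of $M_k$ in non-decreasing order. The goal is then to show that $\mathcal{B}_{as}$ is feasible and $\mathcal{Q}(\mathcal{B}_{as})\ge \mathcal{Q}(\mathcal{B}^\star)$.

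The heart of the argument is a buffer-monotonicity lemma: inside any fixed cycle, the ascending-order assignment yields a pointwise at-least-as-large playback-buffer occupancy as any other permutation of the same multiset. I would prove this by a bubble-sort reduction, swapping two adjacent segments whenever $b^{(s)}>b^{(s+1)}$. Each such swap preserves the total data downloaded over the pair, hence leaves $B(t_k)$ unchanged strictly before and strictly after the pair's download window, while inside that window the smaller-bit-rate segment finishes downloading strictly earlier (by the recursion in (\ref{B})), so the buffer increments earlier. Composing these swaps sorts the cycle while monotonically improving the buffer trajectory.

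From this lemma one transfers monotonicity to the QoE metrics. $\phi_1$ is invariant because the multiset is preserved. The total download time of each cycle is invariant, so the cycle boundaries (in terms of segment indices) are preserved and downstream cycles are unaffected; combined with pointwise buffer improvement, this shows that no new mid-cycle stall is created, so $\phi_4$ does not increase and the feasibility constraint $\phi_4\le p$ is maintained. Under ascending order each BaW phase loads the $x_0$ smallest rates of its cycle, so by (\ref{BaW}) the amount of data prefetched is minimal and $\tau_{BaW}$ cannot grow, giving $\phi_2$ and $\phi_5$ non-increasing. Within each cycle, sorting ascending also minimizes the number of within-cycle quality switches.

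The main obstacle is the $\phi_3$ contribution at the cycle interfaces. Sorting inside each cycle may create an unwanted descent from $\max M_k$ at the end of $C_k$ to $\min M_{k+1}$ at the start of $C_{k+1}$, possibly introducing a boundary switch that was not present in $\mathcal{B}^\star$. I would close this gap either by a refined exchange that breaks ties within a cycle so as to align the cycle's terminal value with the next cycle's initial value whenever possible, or by bounding any potential boundary increase by the guaranteed within-cycle reduction of switches. This cycle-interface bookkeeping is, in my view, the delicate step; all remaining pieces reduce cleanly to the pointwise buffer-dominance established above.
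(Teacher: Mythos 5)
Your strategy is essentially the paper's own: treat the BaW-BaP cycles as independent (the buffer resets at each stall), reorder the multiset of bit-rates ascending within each cycle, and compare the five metrics term by term --- the paper's first case is your single-cycle argument and its second case is the per-cycle application. Where you differ is in making the buffer-dominance step an explicit bubble-sort lemma (the paper argues it informally via a single swap, cf.\ its Fig.~\ref{fig:WW}) and in flagging the $\phi_3$ cycle-interface problem. On that second point you have found a genuine hole in the \emph{published} proof rather than introduced one: the paper simply asserts that per-cycle sorting reduces ``the global number of quality switching,'' but a descending cycle $[b_2,b_2,b_1,b_1]$ followed by an already-ascending cycle $[b_1,b_2]$ has two switches before sorting and three after (the new descent at the boundary), with no within-cycle savings to pay for it. Neither of your proposed repairs closes this unconditionally --- tie-breaking cannot help when every element of $M_{k+1}$ is below $\max M_k$ --- so this step remains open in both your argument and the paper's.

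There is a second gap, which you and the paper share. Your swap lemma claims the buffer occupancy is pointwise at least as large after sorting, but by (\ref{B}) the occupancy is downloads minus plays, and by (\ref{BaP})--(\ref{BaW}) playback begins as soon as $x_0$ segments are buffered. A swap that moves a small segment into the BaW phase shortens $\tau_{BaW}$, so $S_{BaP}$ moves \emph{earlier} and every subsequent playback deadline tightens; the download curve dominates pointwise, but the consumption curve advances too, and the difference need not. Concretely, with $x_0=1$, per-segment playback time $1.5$\,s, and download times of $1$\,s and $2$\,s for the two qualities, the order (high, low) plays without stalling while the sorted order (low, high) stalls at $t=2.5$\,s. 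The paper's proof argues only on the download side and has the same blind spot; a correct treatment must either compare strategies at a fixed startup instant or re-verify feasibility of the sorted schedule directly, which is exactly what the paper's Algorithm~\ref{alg:MAESTRO} does by explicit checking but what Proposition~\ref{prop1} as proved does not.
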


\begin{proof}
We shall show that for any feasible strategy $\mathcal{B}$ that satisfies the constrains in (\ref{max}), there exits an ascending bitrate per BaW-BaP cycle strategy $\mathcal{B}_{as}$ such that
$$
 \mathcal{Q}(\mathcal{B}_{as}) \geq \mathcal{Q}(\mathcal{B}).
$$
Here, we distinguish two cases:
\begin{itemize}
\item {\it Case where the session is composed of one BaW-BaP cycle, i.e, no stall during the session:}
Without loss of generality, and for the sake of illustration, we assume that we can stream and play the video in a smooth way under a non-ascending bitrate strategy $\mathcal{B}$. Then, there $ \exists \ m \le n $ such that $ b^{(m)} \ge b^{(n)}$. Let $t_p$ and $t_q$ be the requesting times of $b^{(m)}$ and $b^{(n)}$, respectively, as illustrated in Fig. \ref{fig:WW}. If we switch between qualities of segments $m$ and $n$, then the buffer state will be more relaxed toward the stall constraint since segment $m$ will be streamed in a shorter time and, then, the following segments will be appended sooner to the buffer, which may not induce buffer stalls. That said, if we reorder $\mathcal{B}$ in an ascendant way, the video will not experience any stall.
Let $\mathcal{B}_{as}$ be the resulting set after reordering $\mathcal{B}$ in an ascending way, then we have $$\phi_4(\mathcal{B}_{as})=\phi_4(\mathcal{B})=0 \ \text{and} \ \phi_5(\mathcal{B}_{as})=\phi_5(\mathcal{B})=0.$$

As we keep the same selected bitrates in $\mathcal{B}_{as}$ as in $\mathcal{B}$, the average per segment bitrate will not change, which gives
$$
\phi_1(\mathcal{B}_{as}) = \phi_1(\mathcal{B}).
$$

Since $\mathcal{B}_{as}$ is an ascending strategy, the video session will start with the lowest quality used by $\mathcal{B}$. Hence, the startup delay will be reduced by using $\mathcal{B}_{as}$ compared to $\mathcal{B}$. Therefore,
$$
\phi_2(\mathcal{B}_{as}) \leq \phi_2(\mathcal{B}).
$$
Now, let $L$ be the number of qualities selected by $\mathcal{B}$. Thus, the number of quality switching under strategy $\mathcal{B}$ will be at least $L$. On the other hand, strategy $\mathcal{B}_{as}$ will experience exactly $L-1$ quality switching since the quality level of segments increases during the session. Therefore, we have
$$
\phi_3(\mathcal{B}_{as}) \le \phi_3(\mathcal{B}).
$$

All things considered, we have
$$
 \mathcal{Q}(\mathcal{B}_{as}) \geq \mathcal{Q}(\mathcal{B}).
$$

\item {\it Case where the session is composed of more than one BaW-BaP cycle, i.e, one or more stalls during the session:}
Here, we assume that, for a given horizon window, we can stream the video under a non-ascending bitrate strategy $\mathcal{B}$ with $\phi_4$ stall events over the session ($\phi_4 \ge 1$). Undoubtedly, reordering all the segments bitrates in an ascending way will add more protection to the buffer against the stall constraint, which may reduce the number of stalls $\phi_4$. However, it does not mean that the global QoE will increase, because the duration of the stalls will change depending on the new moments of their occurrence, their corresponding requested qualities and the dynamic of the user throughput. For these reasons, our ascending bitrate strategy will not work per a hole session. In an other hand, we admit that, when a stall happens, the buffer state becomes independent of its previous states before the stall, which makes all the BaW-BaP cycles independent from each other.
Let's write $\mathcal{B}=\{ \mathcal{B}_1 , \cdots, \mathcal{B}_{\phi_4+1} \} $, where $\mathcal{B}_i$ denotes the set of bitrates used on the $i^{th}$ BaW-BaP cycle.

If we apply our previous ascending strategy on each of the ($\phi_4 + 1$) BaW-BaP cycles, we end up reducing the duration of all the BaW-phases (including the startup and the rebuffering delays) and the global number of quality switching, while maintaining the same number of stalls and the same average quality.

Let $\mathcal{B}_{as}=\{ {\mathcal{B}_1}_{as}, \cdots, {\mathcal{B}_{\phi_4+1}}_{as} \}$, then we have
 $$
 \mathcal{Q}(\mathcal{B}_{as}) \geq \mathcal{Q}(\mathcal{B}).
$$
\end{itemize}
This concludes the proof.

\end{proof}

\begin{figure} [h!]
\begin{center}
\includegraphics[width=6cm]{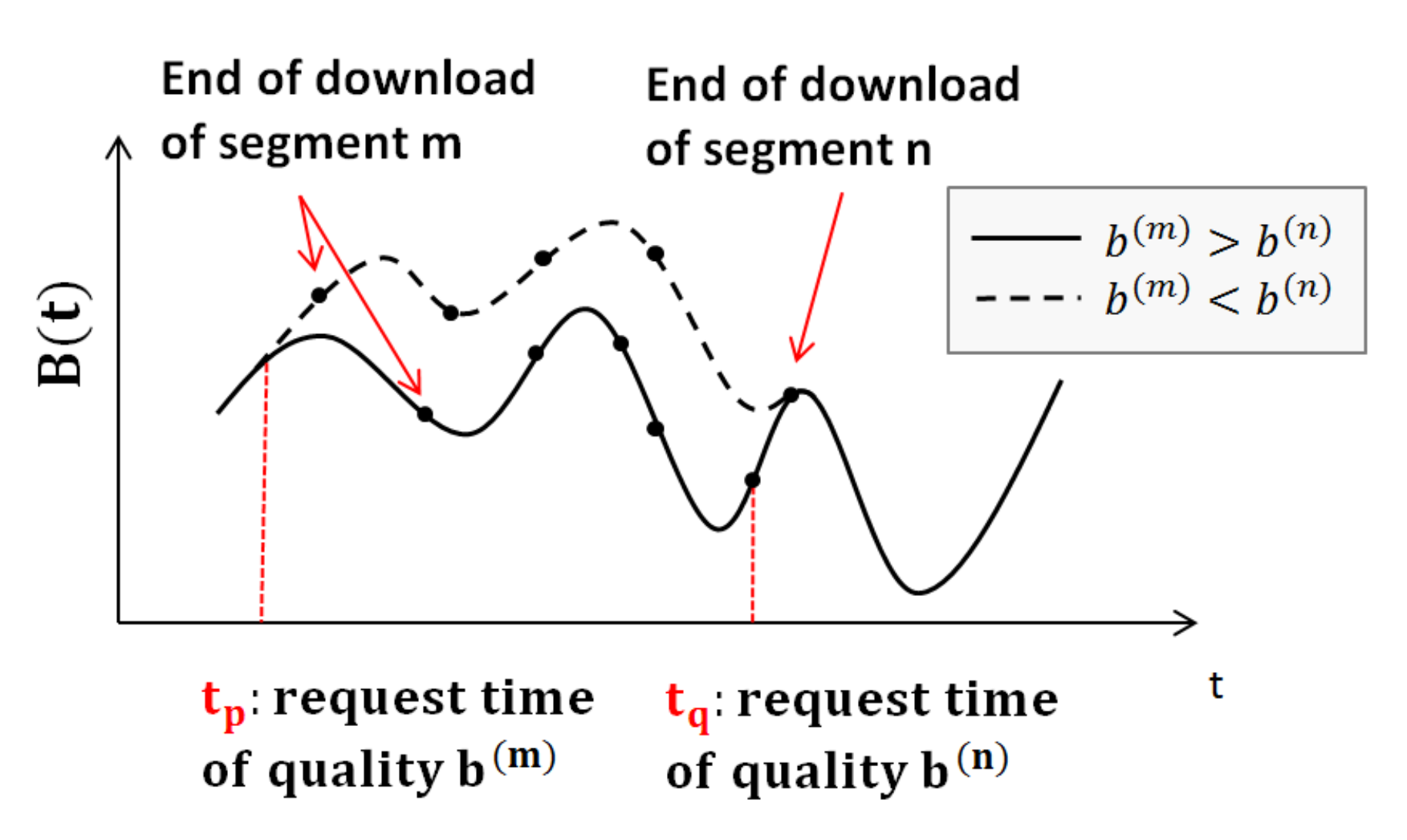}
\caption{Impact of bit-rate switching on the buffer state evolution.}
\label{fig:WW}
\end{center}

\end{figure}

\subsection{Algorithm for Optimal Solution}

In this section, we describe the main steps to follow to build an optimal solution of at most $p$ of stalls during the streaming session. The key idea of this algorithm is {\it stall enforcement}; As we assume knowing the future throughput, we are able to enforce stalls at any moment of the streaming session. Once we locate the stalls' positions (at the level of witch segment each stall should happen), we devide the session into multiple BaW-BaP cycles then look for the optimal ascending bit-rate strategy over each cycle.
The optimal number of stalls is obtained through an exhaustive research; we start computing the optimal strategy with zero stalls, then with one stall up to $p$ stalls. The stalls' distribution is also obtained through an exhaustive research. In what follows, we describe the main steps to build an optimal ascending bit-rate strategy over one BaW-BaP cycle: \\
 i) Find all the possible ascending bit-rate combinations of the BaW-phase that allow to build an ascending bit-rate strategy over the hole BaW-BaP cycle (step A and B in Fig.\ref{fig:Algo}). \\
 ii) For each BaW-phase combination, find all the possible ascending strategies that satisfy the constraints of (\ref{max}) (steps 1, 2 and 3 in Fig.\ref{fig:Algo}). \\
 iii) For each strategy, compute the QoE metrics then apply the vector of weights $\mathcal{W}$ to find the best solution. \\
 To find {\it all} the possible ascending strategies, use the tree of choice described in \cite{ImenWoWMOM16 }IV-3.

\begin{figure} [h!]
\begin{center}
\includegraphics[width=6cm]{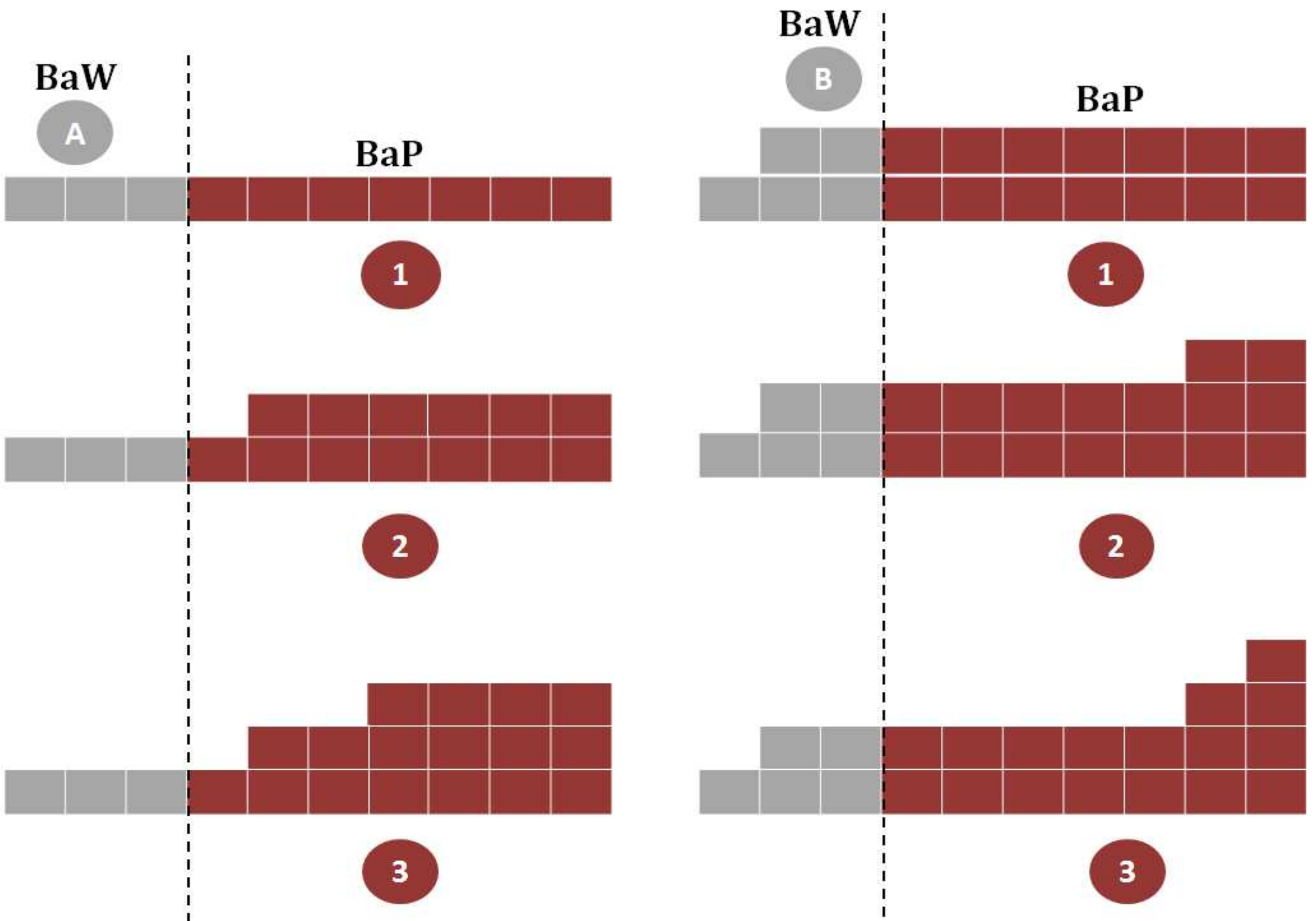}
\caption{Steps for building an increasing bit-rate strategy.}
\label{fig:Algo}
\end{center}

\end{figure}

\subsection{Heuristic for a Sub-optimal Solution}
The key idea of our heuristic is twofold: \\
1) The way we found the ascending strategy on a BaW-BaP cycle is different than the optimal strategy; Once we fix the bit-rates combination of the BaW-phase, we progressively increase the bit-rates of the BaP-phase starting by the end of the BaW-BaP cycle till reaching the point (segment) at the level of which a stall will happen if we keep increasing the quality ($5^{th}$ segment in step A-2 and $7^{th}$ segment in step A-3 of Fig.\ref{fig:Algo}). Given that the number of segments of the BaW-phase is small in general, it does not take much time to find all the possible ascending BaW-phase combinations, which makes our heuristic fast (see Algorithm \ref{alg:MAESTRO}). \\
2) rather than doing an exhaustive research on the number of stalls we process as follows: we start finding the optimal strategy with zero stalls, then, we check if the global QoE will increase with one stall enforcement (we try all the possible positions). If it does, we try to enforce a second stall, if not, we stop and return the latest strategy. We keep increasing the number of stalls as described till reaching the maximum $p$ or till the QoE function decreases.
See Algorithm (\ref{alg:CASTLE}), where $K_i$, $i\le p$ denotes the position of the $i^{th}$ stall. \\

\begin{algorithm}
\caption{MAESTRO: MAximizing qoE with aScending biTRate strategy over One-cycle}\label{alg:MAESTRO}
\begin{algorithmic}[1]
\State \textbf{Input: $\{b_l\}_{l \le L} , \ c \ , \ S \ , \ W$}
\State $M=[ \ ]$ , $b^{(s)}=b_1\ \ \forall s \in 1,\dots , x_0$ \Comment {\begin{small} {Initialization}\end{small} }
 \For{$l_{1}=1:L$} \Comment{\begin{small} {For segments of BaW-state} \end{small}
 \State $\{{b^{(s)}_{Previous}}\}_{1 \le s \le S}=\{b^{(s)}\}_{1 \le s \le S}}$
 \State $b^{(s)}=b_{l_1} \forall \ s \in x_0+1,\dots ,S$
 \State \textit{check if it is possible to stream $\{b^{(s)}\}_{1\le s \le S}$ 		 \State without stalls }
 \If {no stall happens}
 	\State $s=x_0$ \Comment {\begin{small} {Start from end to back} \end{small} }
 	\While{$s \ge 1$ \textbf{and} \textit{No stall happens}}
 		\State $b^{(s)}=b_{l_1}$
 		\State $s=s-1$
 		\State \textit{check if it is possible to stream $\{b^{(s)}\}_{1\le s \le S}$ 		 		 \State without stalls }
			\EndWhile
 	\If {\textit {a stall happens} }
 		\State $b^{(s)}= b^{(s)}_{Previous}$
 	\EndIf
 	\State $I_{l_1,l_1}= \{b^{(s)}\}_{1\le s \le S} $
 	\State \text{Compute $\Phi_{l_1,l_1}=(\phi_1, \phi2 , \phi3)$ }

 	\For {$l_2=l_1+1:L$} \Comment{\begin{small} { For segments of BaP-state}\end{small} }
 		\State $\{{b^{(s)}_{Previous}}\}_{1 \le s \le S}=\{b^{(s)}\}_{1 \le s \le S}$
 		\State $s=S$ ; \Comment{ \begin{small} {Start from end to back}\end{small} }
 		\While{$s > x_0$ \textbf{and} \textit{No stall happens}}
 			\State $b^{(s)}=b_{l_2}$
 			\State $s=s-1$
 			\State \textit{ check whether it is possible to stream 		 		
 \State $\{b^{(s)}\}_{1\le s \le S}$ without stalls }
				\EndWhile
 		\If {\textit {a stall happens} }
 			\State $b^{(s)}= b^{(s)}_{Previous}$
 		\EndIf
 		\State $I_{l_1,l_2}= \{b^{(s)}\}_{1\le s \le S} $
 		\State \text{Compute $\Phi_{l_1,l_2}=(\phi_1, \phi_2 , \phi_3)$ }
 \State $M= [M ;\Phi_{l_1,l_2} ]$

 \EndFor

 \EndIf
 \EndFor

\State \textbf{return} $ \Phi^*=\underset{M(i,:), i \ge 1} {\textbf {Argmax}} (M[w_1, w_2, w_3]^T)$
\end{algorithmic}
\end{algorithm}

\begin{algorithm}
\caption{CASTLE: asCending bitrAte STrategy over muLti-cycle sEssion }\label{alg:CASTLE}
\begin{algorithmic}[1]
\State \textbf{Input: $\{b_s\}_{s \le L} , \ c \ , \ x_0 , \ S , \ p , \ W$}
\State $\text{Bound}_{Inf}=x_0 \ , \ \text{Bound}_{Sup}=S \ $, \ i=1
\State \text{Previous QoE}= QoE without stalls
\State $\text{Previous} \ \Phi= \Phi$ without stalls
\For {$i \le p$}
\For { $ K_i \in \{x_0+1,.., S-x_0\}$ }
\If { \text{there are some stalls already positioned}}
\State $\text{Bound}_{Inf}=\text{max}\{K_j; K_j<K_i\}_{j\le i} \ \text{or} \ x_0 $
\State $\text{Bound}_{Sup}=\text{min}\{K_j; K_j>K_i\}_{j\le i} \ \text{or} \ S$
\EndIf
\State $\text{BaW-BaP}_{PreStall}= \{\text{Bound}_{Inf} .. K_i-1 \}$
\State $\text{BaW-BaP}_{PostStall} =\{K_i .. \text{Bound}_{Sup} \}$
\State $ \textbf{MAESTRO}([w_1, w_2, w_3]^T,\text{BaW-BaP}_{PreStall})$
\State $ \textbf{MAESTRO}([w_1, w_2, w_5]^T,\text{BaW-BaP}_{PostStall}) $
\State compute $\Phi_{K_i} = (\phi_1 , \phi_2 , \phi_3 , i , \phi_5) $
\State $M=[M; \Phi_{K_i}]$
\EndFor
\If { ${\text{max} \{\{MW^T \}_{j\ge 1}\} }$ > \text{Previous QoE} }
\State $K_i$=$\underset{j\ge 1}{\text{argmax}}\{\{MW^T \}_{j\ge 1}\}$
\State \text{Previous QoE= Resulting QoE}
\State $\text{Previous} \ \Phi=\Phi_{K_i}$
\State i=i+1
\Else \Comment {\begin{small}{No need for stalls to increase the QoE}\end{small}}
\State $\textbf{return} \ \Phi^*=\text{Previous} \ \Phi $
\EndIf
\EndFor
\State $\textbf{return} \ \Phi^*=\text{Previous} \ \Phi $
\end{algorithmic}
\end{algorithm}

\section{Multi-user QoE optimization problem} \label{section3}
\subsection{Problem formulation}
In this section, we extend the QoE optimization problem to the multi-user case. We propose to find the vector of weights $\mathcal{W}^*$ that maximizes the QoE among all users.
The main objective is to maximize  the users' feedbacks on the video delivery using a synthetic QoE dataset. The QoE problem of the multi-user case can be mathematically expressed as
\begin{equation}
\label{argmax}
\mathcal{W}^* \in \underset{\mathcal{W}}{\text{argmax}} \left\{\sum_{u=1}^U\E_{r_u} \{\mathcal{F}_{r_u}(\mathcal{W})\} \right\}
\end{equation}
where $r_u$ is the throughput of user $u$ and $\mathcal{F}_{r_u}(\mathcal{W})$ is his feedback on the quality he received after QoE optimization (\ref{max}) using vector $\mathcal{W}$.

\subsection{Practical solution: Closed-loop- based framework with users' feedbacks} \label{framework}
\subsubsection{Framework design}
The multi-user QoE optimization problem requires to solve problem (\ref{max}) for each user $u \in \{1,\dots,U\}$, knowing the exact value of vector $\mathcal{W}^*$ that meets all the users' preferences.  The challenge is then to combine single user QoE optimization with a QoE training mechanism in a closed-loop manner to progressively learn the value of $\mathcal{W}^*$. To do so, we develop two sub-frameworks  and make them interact together within a closed-loop based framework, one is for QoE optimization and the other is for QoE training (see Fig.\ref{fig:closeloop} and Fig.\ref{fig:interactions});

\begin{figure}[h!]
\begin{center}
\includegraphics[width=6cm]{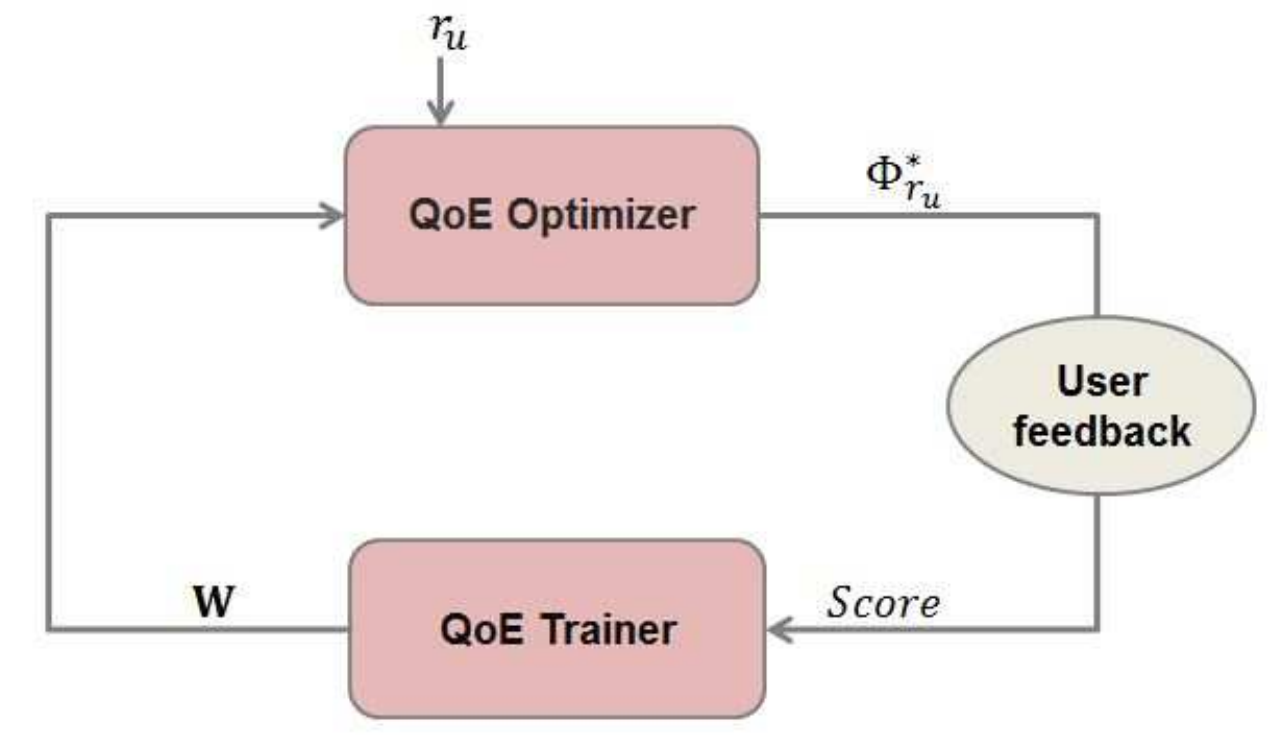}
\caption{Closed-loop based framework for QoE-optimization.}
\label{fig:closeloop}
\end{center}
\end{figure}

\begin{figure} [h!]
\begin{center}
\includegraphics[width=9cm]{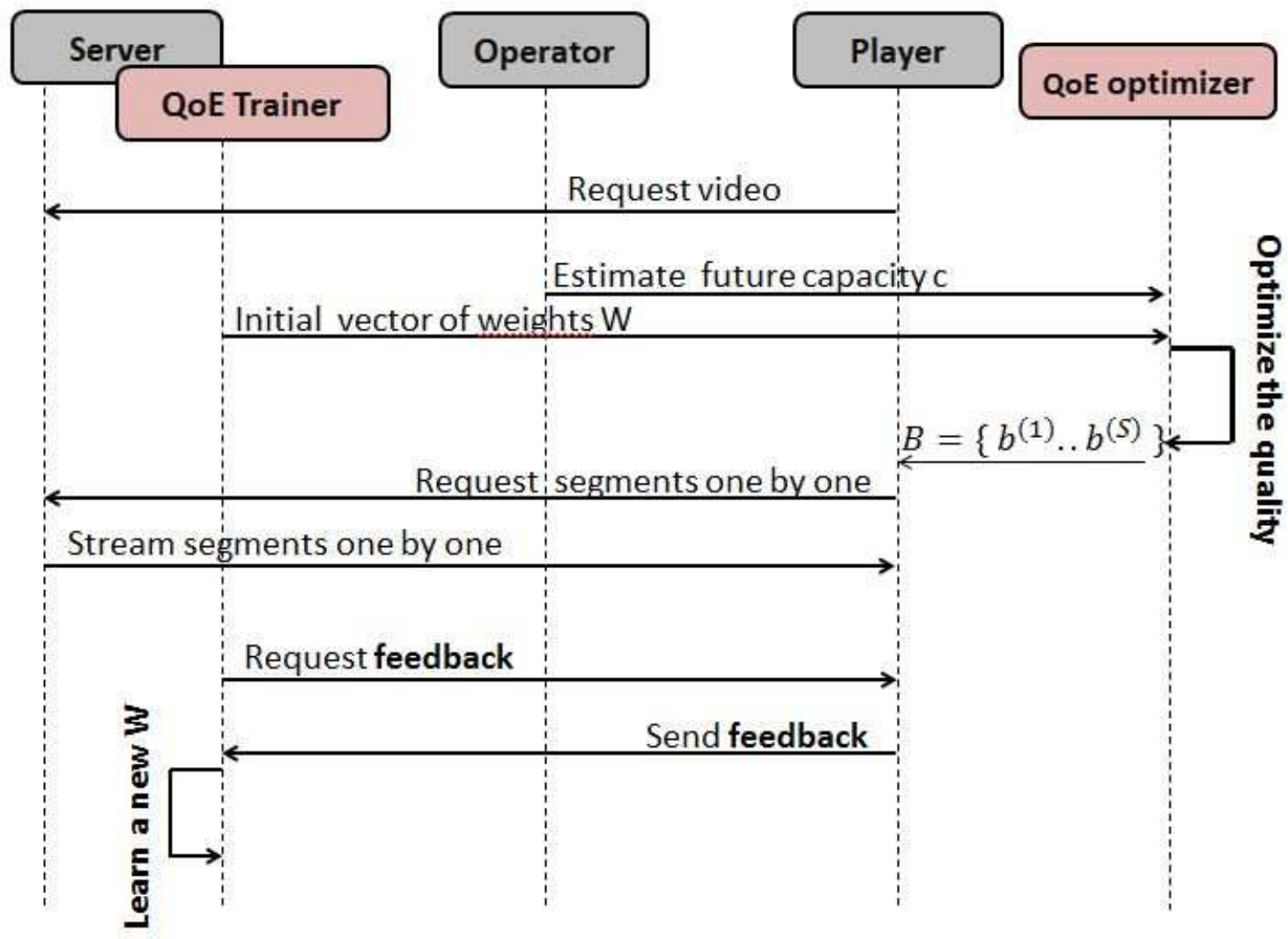}
\caption{Framework interaction with video streaming entities.}
\label{fig:interactions}
\end{center}

\end{figure}

\subsubsection{QoE training tool}
To compute $\mathcal{W}^*$, we use a simple neural network \cite{ MAGOULAS06IJBC}, where the training samples are couples of QoE metrics and user feedback.
We define the training dataset as $\{ ({\Phi^*}_{r_u}, {\mathcal{F}}_{r_u}) \}_{1 \le u \le U}$, where ${\Phi^*}_{r_u}$ is the vector of QoE metrics delivered by (\ref{max}) under throughput $r_u$ and vector $\mathcal{W}$. ${\mathcal{F}}_{r_u}$ being the corresponding feedback.

We define the activation function of the neural network as a linear function $h_\mathcal{W}(\Phi)=\mathcal{W}^\top \Phi$, where $\Phi$ is the input vector and $\mathcal{W}$ is the vector of weights to learn (See Fig. \ref{fig:NN}).

\begin{figure} [h!]
\vspace{-0.3cm}
\begin{center}
\includegraphics[width=5cm]{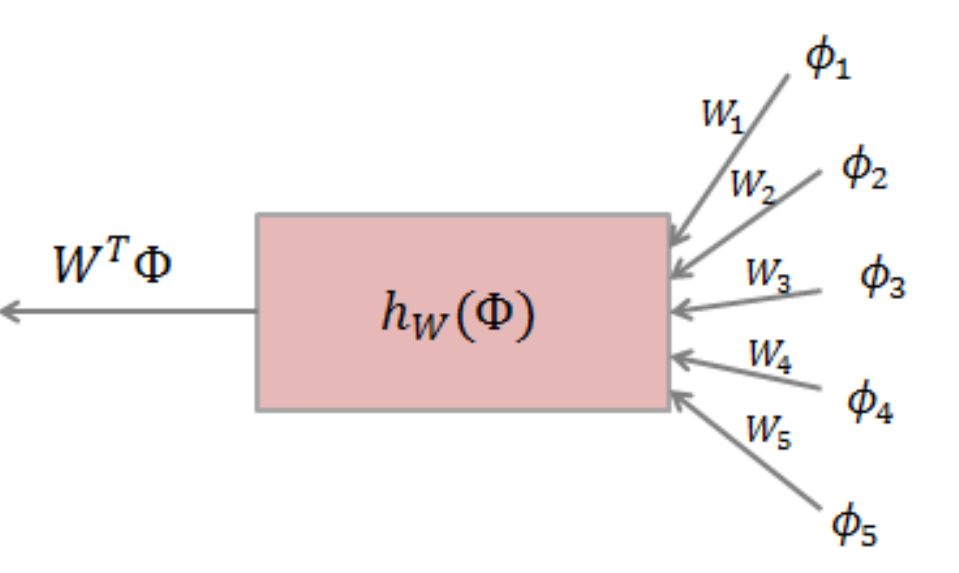}
\caption{Architecture of the QoE trainer.}
\label{fig:NN}
\end{center}
\end{figure}
We make use of a mini-batch learning algorithm based on the gradient descent. The goal behind using the gradient descent is to minimize the average error rate between $\mathcal{F}_{r_u}$ and the network output $h_{\mathcal{W}}({\Phi^*}_{r_u})$, $u \in \{1 \cdots U \}$.

Let $\textbf{Loss}(\mathcal{W}, {\Phi^*}_{r_u}, {\mathcal{F}}_{r_u})$ be the half squared error corresponding to the $u^{th}$ training sample
 and $\textbf{Loss}(\mathcal{W}, m)$ be the averaged error among $m$ training samples, namely
\begin{equation}
 \textbf{Loss}(\mathcal{W}, {\Phi^*}_{r_u}, {\mathcal{F}}_{r_u}) =\frac{1}{2}| h_{\mathcal{W}}({\Phi^*}_{r_u}) - {\mathcal{F}}_{r_u}|^2
\end{equation}
\vspace{-0.2cm}
\begin{equation}
\textbf{Loss}(\mathcal{W},m) =\frac{1}{m} \sum_{u=1}^m \textbf{Loss}(\mathcal{W}, {\Phi^*}_{r_u}, {\mathcal{F}}_{r_u})
\end{equation}
To reduce the average loss, the gradient descent
updates the vector of weights $\mathcal{W}$ in a way that it moves oppositely to the direction of the gradient vector $\nabla \textbf{Loss}(\mathcal{W},m)$. The algorithm stops when a predefined minimum loss $\epsilon$ is reached or when the number of updating steps is above a given threshold $T_{rs}$. (See Algorithm \ref{alg:gradient}).

The partial derivatives of $\textbf{Loss}(\mathcal{W},m)$ in function of the weights $\omega_k$, $k \le 5$ are given by
\begin{equation}
 \frac{\partial \textbf{Loss}(\mathcal{W},m)}{ \partial \omega_k} = \frac{\partial}{\partial \omega_k} \frac{1}{m} \sum_{u=1}^m \textbf{Loss}(\mathcal{W}, {\Phi^*}_{r_u}, {\mathcal{F}}_{r_u})
 \end{equation}

$$ = \frac{1}{m} \sum_{u=1}^m {\Phi^*}_{{r_u},k} (\mathcal{W}^\top {\Phi^*}_{r_u}- {\mathcal{F}}_{r_u})$$

\begin{algorithm}
\caption{The mini-batch Gradient descent}\label{alg:gradient}
\begin{algorithmic}[1]
\State \textbf{Input:$\{ ({\Phi_{r_1}}^*, \mathcal{F}_{r_1}), .. ,({\Phi_{r_m}}^*, \mathcal{F}_{r_m}) \}, \ \epsilon , \ \mu , \ T_{rs} , \ [\alpha_{min}, \alpha_{max}]$}
\State \small{GoodConvergence =0 ; SlowConvergence =0 ; Divergence = 0;} Set $\alpha$ in $[\alpha_{min}, \alpha_{max}]$ ; Set $\mathcal{W}$ very small;
\Repeat
\Repeat
\State $\mathcal{W} = \mathcal{W} - \alpha. \nabla \textbf{Loss}(\mathcal{W},m)$
\Until $\textbf{Loss}(\mathcal{W},m) \le \epsilon $ or $T_{rs}$ iterations are done
\If {$\textbf{Loss}(\mathcal{W},m) \le \epsilon$}
\State GoodConvergence =1
\Else
\If {$\textbf{Loss}(\mathcal{W},m)$ is decreasing }
\State SlowConvergence = 1
\State increase ($\alpha_{min}$)
\State Set $\alpha$ in $[\alpha_{min}, \alpha_{max}]$
\Else
\State Divergence =1
\State decrease($\alpha_{max}$)
\State Set $\alpha$ in $[\alpha_{min}, \alpha_{max}]$
\EndIf
\EndIf
\Until GoodConvergence or $(\alpha_{max}- \alpha_{min}) \le \mu $
\State $\textbf{return} \ \mathcal{W}^*= \mathcal{W}$
\end{algorithmic}
\end{algorithm}

\section {Numerical results} \label{section4}
\label{Results}
\subsection{Simulation environment}
We evaluate the performance of the proposed framework through extensive simulations using NS3 and Matlab.
NS3 was used to generate standard-compliant correlated throughputs. To get many throughput samples, we performed extensive simulations of an LTE network by varying the mobility of users each time. We put all NS3 parameter settings in Table \ref{parametersns3}.
The QoE optimization sub-framework and the QoE trainer were both developed using Matlab. 
As in real world, we consider users' feedbacks as scores rated from 1 to 5. When a quality ${\Phi_{r}}^*$ is delivered to a user, we look through the predefined synthetic QoE dataset to find the score it may give. In the dataset, we put all the possible values of vector ${\Phi_{r}}^*$ in a specific priority order, i.e., $|w_{satll}| >> |w_{rebuffering}| >> |w_{average-quality} | >> |w_{startup}| >> |w_{switching}|$. These vectors were then grouped in classes. To each class we associated a MOS and a specific distribution of scores. When ${\Phi_{r}}^*$ is delivered, we determine the class to which it belongs. Then, according to that class we randomly generate a score based on the distribution of scores in the dataset. 
Note that the throughput samples used at the level of the QoE optimization sub-framework were randomly selected (according to a Uniform distribution) among $1000$ throughput samples generated with NS3.
All Matlab parameter settings are listed in Table \ref{parameters}.

\begin{table}[t]
\footnotesize
\begin{center}
\begin{tabular}{|l|l|}
\hline
\textbf{Number of macro cells} & 1 \\
\hline
\textbf{Number of UEs per cell} & 10 \\
\hline
\textbf{eNb Tx Power} & 46 dBm \\
\hline
\textbf{eNb noise figure} & 5 dB \\
\hline
\textbf{UE noise figure} & 9 dB \\
\hline
\textbf{Pathloss model} & COST 231 \\
\hline
\textbf{MAC scheduler} & Proportional fair 50 RBs \\
\hline
\textbf{Fading model} & Pedestrian \\
\hline
\textbf{Transmission model} & MIMO Transmit diversity \\
\hline
\textbf{Mobility model} & RandomWalk2dMobilityModel \\
\hline
\textbf{Velocity of users} & Uniform [5,16] m/s \\
\hline
\textbf{EPS bearer} & NGBR-VIDEO-TCP-DEFAULT \\
\hline
\textbf{Fading model} & Pedestrian \\
\hline
\textbf{Simulation length} & 70 s \\
\hline
\end{tabular}
	\end{center}
	\caption{NS3 simulation setting parameters.}
	\label{parametersns3}
	\vspace{-0.5cm}
\end{table}

\begin{table}[t]
\footnotesize
\begin{center}
\begin{tabular}{|l|l|}
\hline
\textbf{Window Size} & 70 s \\
\hline
\textbf{Throughput Time Slot} & 1 s \\
\hline
\textbf{Video Length} & 30 s \\
\hline
\textbf{Segment Length} & 1s \\
\hline
\textbf{Video frame rate} & 30 fps \\
\hline
\textbf{Playback cache} & 5s \\
\hline
\textbf{Bit-rate levels Mbps} & [0.4 0.75 1 2.5 4.5] \\
\hline
\textbf{Maximum number of stalls (p)} & 1 \\
\hline
\end{tabular}
	\end{center}
	\caption{Matlab simulation setting parameters.}
	\label{parameters}
	\vspace{-0.5cm}
\end{table}

\begin{figure} [h!]
\begin{center}
\includegraphics[width=7cm]{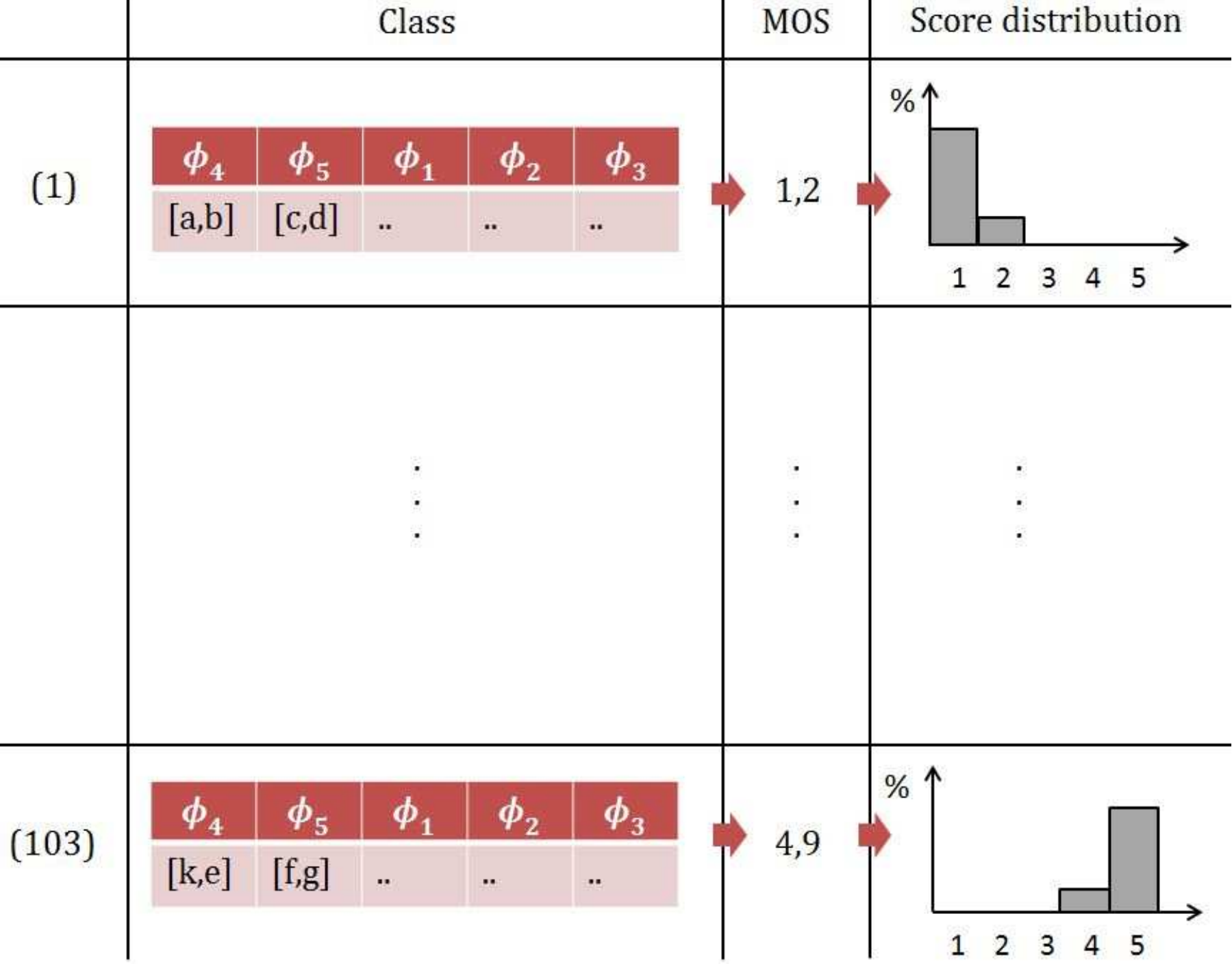}
\caption{Synthetic-dataset for score generation.}
\label{fig:dataset}
\end{center}
\end{figure}

\begin{figure} [h!]
\begin{center}
\includegraphics[width=8cm]{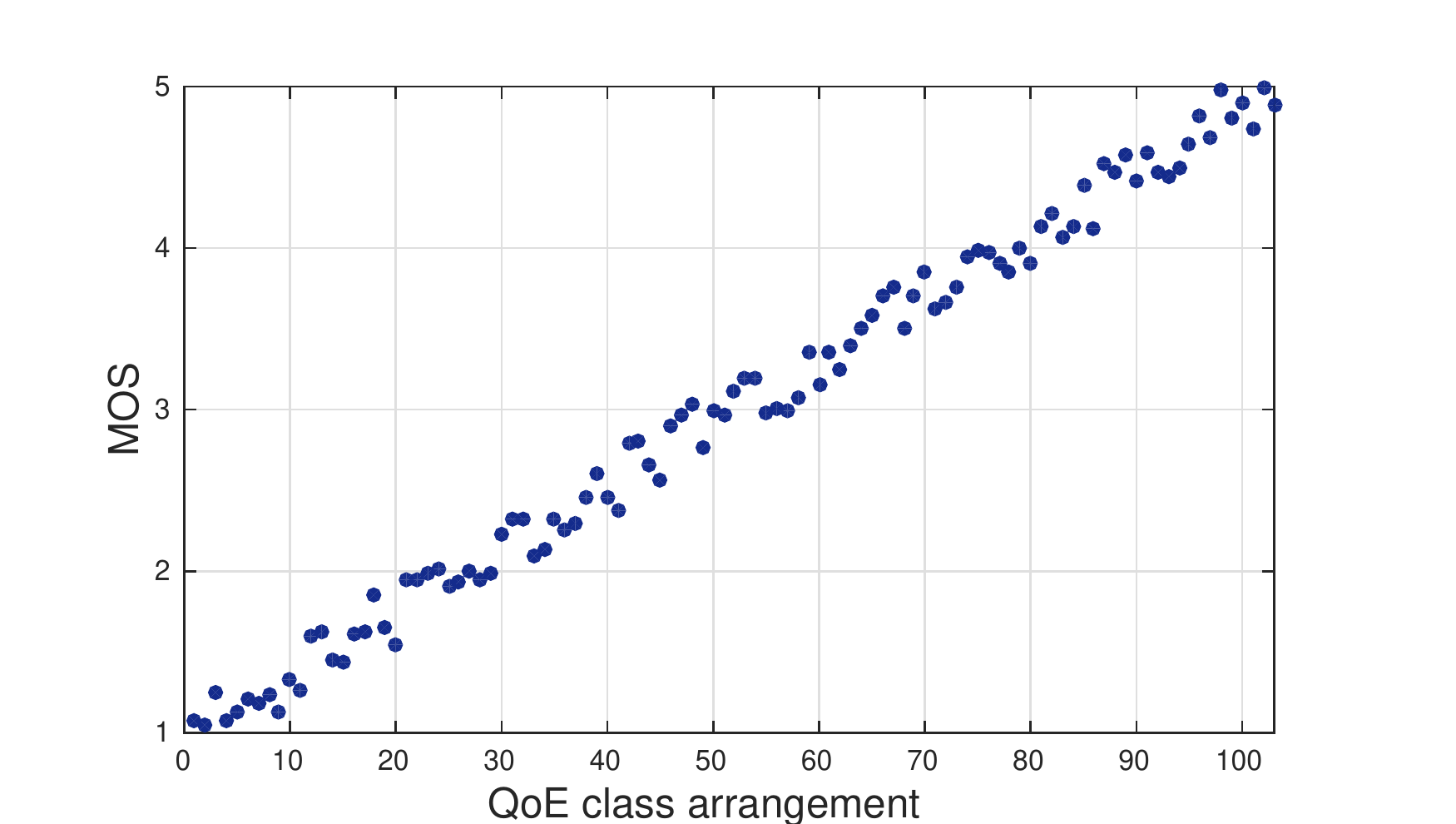}
\caption{Synthetic MOS as function of the QoE class arrangement.}
\label{fig:distribution}
\end{center}
\end{figure}

\subsection{Performance results}
The performance evaluation of the closed-loop based framework allows us to show that (i) the learning converges ultimately to a steady state, in which the learning output is a quasi-constant vector $\mathcal{W}^*$, and that (ii), more importantly, this vector $\mathcal{W}^*$ achieves the highest QoE compared to the other vectors computed throughout the learning process.

In Fig. \ref{fig:W}, we show the evolution of the mean square variation of vector $\mathcal{W}$ during the learning process for different values of the mini-batch size. Results show that for all cases, the variation tends to zero, although the decrease is slow  in some cases (case of 5 and 50 scores). A fast convergence is however noticed in the case of 10 scores. The difference in the convergence time is actually due to the random character of the throughput selection and the scores generation.
In a second step, we were comparing the final outputs $\mathcal{W}^*$. We noticed that they were not exactly the same. Hence, we computed the MOS when each of the previous updated values of $\mathcal{W}$ was applied with the QoE optimization sub-framework under $1000$ randomly selected throughputs. Fig. \ref{fig:MOS} shows that for the four mini-batch sizes, the MOS experiences some fluctuations with the first values of $\mathcal{W}$. Then, when it tends to the values obtained at the steady state, it converges to the highest MOS value (around 4.8 for the four cases). These results offer hope that the proposed closed-loop based framework can be designed around QoE optimization for video adaptation and delivery in real-world environment.

\begin{figure} [h]
\begin{center}
\vspace{0cm}
\includegraphics[width=8.5cm]{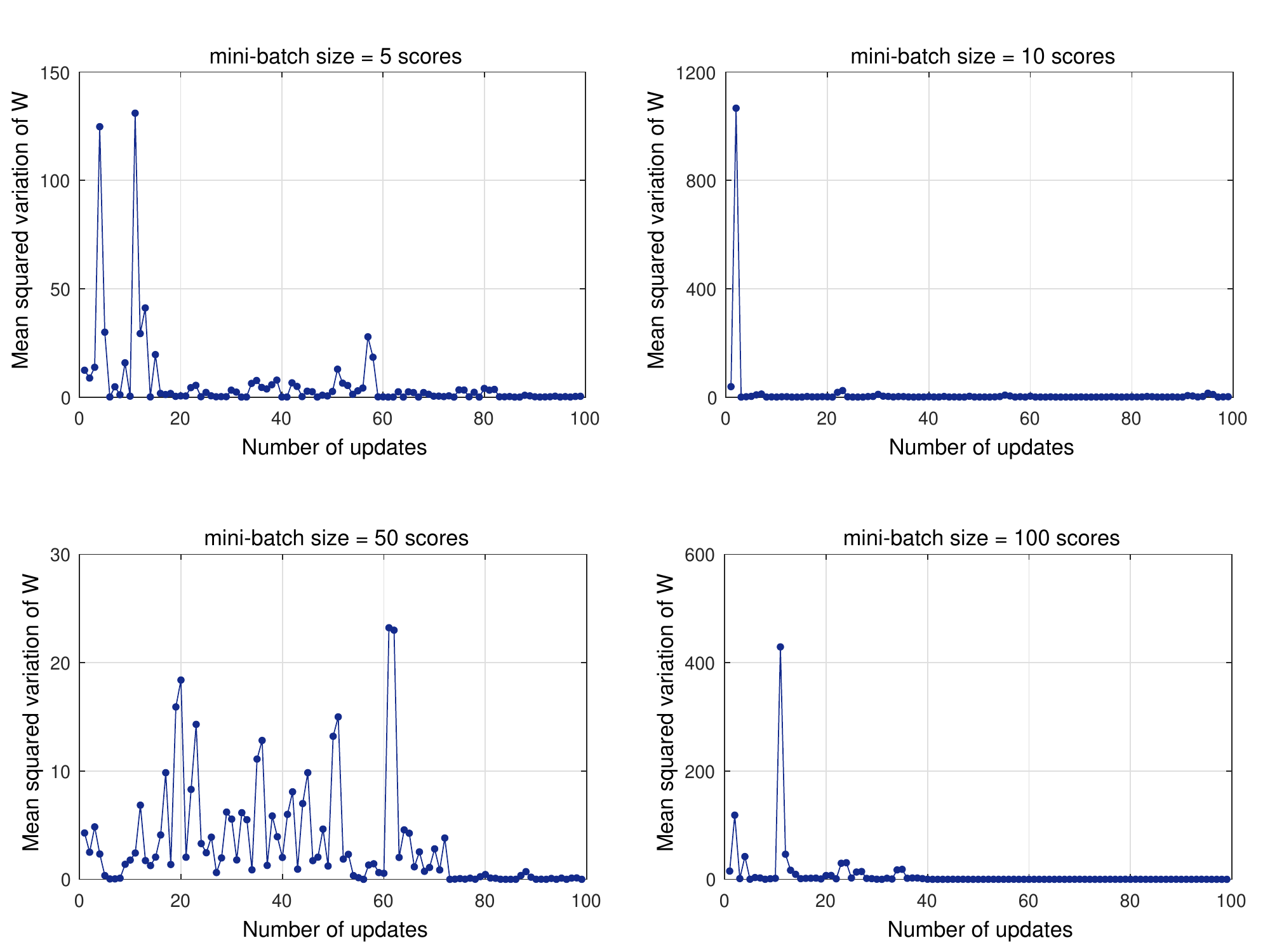}
\caption{The mean square variation of vector $\mathcal{W}$ during the learning process.}
\label{fig:W}
\end{center}

\begin{center}
\includegraphics[width=8.5cm]{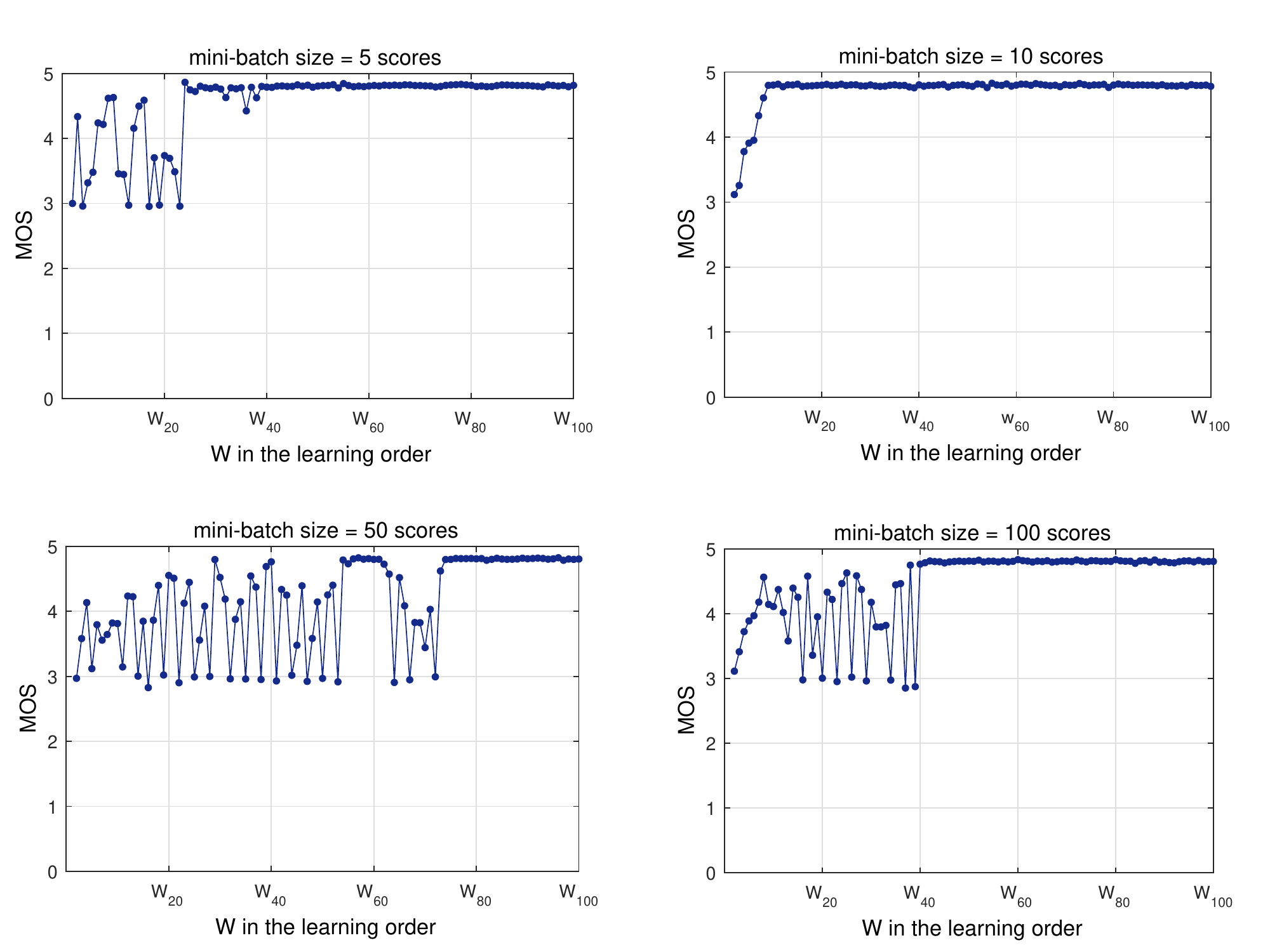}
\caption{The MOS of the QoE-optimization sub-framework using the updated values of vector $\mathcal{W}$.}
\label{fig:MOS}
\end{center}

\end{figure}

\section{conclusion} \label{conclusion}
In this paper, we have addressed a QoE optimization problem with machine learning to optimize the quality of the delivered video by fitting the real profiles of the users. We have proposed a closed-loop framework based on the users' feedbacks to learn their corresponding QoE function and to proceed to their QoE optimization. By using a synthetic QoE dataset, we have shown the efficiency of the proposed closed-loop system. Indeed, the QoE function learned at the steady state ensures a high quality delivery for the majority of users. These promising results allow us to gain insight on how QoE optimization problem can be handled in a heterogeneous population. As a future step, real scores on real video streaming will be collected in order to study the robustness of the proposed solution.

\section*{Acknowledgement}
This research is partially supported by NSF grants CNS-1720230, CNS-1544782, and SES-1541164.

\bibliographystyle{ieeetr}
\bibliography{mybib}

\begin{thebibliography}{10}

\bibitem{CiscoSurvey}
{\em {Cisco Visual Networking Index: Forecast and Methodology, 2015-2020}}.
\newblock
  {http://www.cisco.com/c/en/us/solutions/collateral/service-provider/visual-networking-index-vni/complete-white-paper-c11-481360.pdf}.

\bibitem{Miguel16NPA}
M.~G. Pineda, S.~Felici{-}Castell, and J.~Segura{-}Garcia, ``Using factor
  analysis techniques to find out objective video quality metrics for live
  video streaming over cloud mobile media services,'' {\em Network Protocols
  {\&} Algorithms}, 2016.

\bibitem{Mellouk12NOC}
M.~S. Mushtaq, B.~Augustin, and A.~Mellouk, ``Empirical study based on machine
  learning approach to assess the qos/qoe correlation,'' in {\em Networks and
  Optical Communications (NOC), 2012 17th European Conference on}, 2012.

\bibitem{Vriendt14BellLabs}
J.~D. Vriendt, D.~D. Vleeschauwer, and D.~C. Robinson, ``{QoE model for video
  delivered over an LTE network using HTTP adaptive streaming},'' {\em Bell
  Labs Technical Journal}, 2014.

\bibitem{Balachandran12ACM}
A.~Balachandran, V.~Sekar, A.~Akella, S.~Seshan, I.~Stoica, and H.~Zhang, ``A
  quest for an internet video quality-of-experience metric,'' in {\em
  Proceedings of the 11th ACM Workshop on Hot Topics in Networks}, ACM, 2012.

\bibitem{Amour2015}
L.~Amour, S.~Souihi, S.~Hoceini, and A.~Mellouk, ``{A Hierarchical
  Classification Model of QoE Influence Factors},'' {\em {13th International
  Conference Wired/Wireless Internet Communications (WWIC) Revised Selected
  Papers}}, pp.~225--238, 2015.

\bibitem{DOB11}
F.~Dobrian, V.~Sekar, A.~Awan, I.~Stoica, D.~Joseph, A.~Ganjam, J.~Zhan, and
  H.~Zhang, ``Understanding the impact of video quality on user engagement,''
  {\em ACM SIGCOMM Computer Communication Review}, vol.~41, no.~4,
  pp.~362--373, 2011.

\bibitem{videoEngagement}
{\em {Youtube:} {Measure} {video ad performance}}.
\newblock {https://support.google.com/youtube/answer/2375431}.

\bibitem{Balachandran13ACM}
A.~Balachandran, V.~Sekar, A.~Akella, S.~Seshan, I.~Stoica, and H.~Zhang,
  ``Developing a predictive model of quality of experience for internet
  video,'' {\em SIGCOMM Comput. Commun. Rev.}, 2013.

\bibitem{Testolin14MED-HOC-NET}
A.~Testolin, M.~Zanforlin, M.~D. F.~D. Grazia, D.~Munaretto, A.~Zanella,
  M.~Zorzi, and M.~Zorzi, ``A machine learning approach to qoe-based video
  admission control and resource allocation in wireless systems,'' in {\em Ad
  Hoc Networking Workshop (MED-HOC-NET), 2014 13th Annual Mediterranean}, 2014.

\bibitem{Bampis}
{Christos G. Bampis, and Alan C. Bovik}, ``Learning to predict streaming video
  qoe: Distortions, rebuffering and memory,'' 2017.
\newblock Available at https://arxiv.org/abs/1703.00633.

\bibitem{she2010}
U.~Shevade, Y.-C. Chen, L.~Qiu, Y.~Zhang, V.~Chandar, M.~K. Han, H.~H. Song,
  and Y.~Seung, ``Enabling high-bandwidth vehicular content distribution,'' in
  {\em ACM CoNEXT, Philadelphia, USA.}, 2010.

\bibitem{KnowingFutureInfocom13}
Z.~Lu and G.~de~Veciana, ``Optimizing stored video delivery for mobile
  networks: The value of knowing the future,'' in {\em INFOCOM, 2013
  Proceedings IEEE}, pp.~2706--2714, April 2013.

\bibitem{ImenWoWMOM16}
I.~Triki, R.~El-Azouzi, and M.~Haddad, ``{NEWCAST: Anticipating Resource
  Management and QoE Provisioning for Mobile Video Streaming},'' in {\em 17th
  International Symposium on a World of Wireless, Mobile and Multimedia
  Networks (IEEE WoWMoM)}, June 2016.

\bibitem{ism/TrikiAH16}
I.~Triki, R.~E. Azouzi, and M.~Haddad, ``{Anticipating Resource Management and
  QoE for Mobile Video Streaming under Imperfect Prediction},'' in {\em {IEEE}
  International Symposium on Multimedia, {ISM}, San Jose, CA, USA, December
  11-13, 2016}, pp.~93--98, 2016.

\bibitem{Yin:2015:CAD}
X.~Yin, A.~Jindal, V.~Sekar, and B.~Sinopoli, ``{A Control-Theoretic Approach
  for Dynamic Adaptive Video Streaming over HTTP},'' {\em SIGCOMM Comput.
  Commun. Rev.}, pp.~325--338, 2015.

\bibitem{MAGOULAS06IJBC}
G.~D. MAGOULAS and M.~N. VRAHATIS, ``Adaptive algorithms for neural network
  supervised learning: A deterministic optimization approach,'' {\em
  International Journal of Bifurcation and Chaos}, 2006.

\end{thebibliography}

\end{document}